\pdfminorversion=4
\documentclass[12pt]{iopart}

\usepackage{float}
\usepackage{graphicx}
\usepackage{url}
\usepackage{amsthm}
\usepackage{caption}
\usepackage{subcaption}

\graphicspath{{./Figures/}}

\newtheorem{theorem}{Theorem}[section]
\newtheorem{corollary}[theorem]{Corollary}

\begin{document}

\title{The Small-World Effect for Interferometer Networks}
\author{Benjamin Krawciw$^{1}$, Lincoln D. Carr$^{1, 2, 3}$, Cecilia Diniz Behn$^{1}$}

\address{$^1$ Department of Applied Mathematics and Statistics, Colorado School of Mines, Golden, CO, USA}
\address{$^2$ Department of Physics, Colorado School of Mines, Golden, CO, USA}
\address{$^3$ Quantum Engineering Program, Colorado School of Mines, Golden, CO, USA}
\ead{benjamin.krawciw@gmail.com}
\vspace{10pt}
\begin{indented}
\item[]October 2023
\end{indented}


\begin{abstract}


Complex network theory has focused on properties of networks with real-valued edge weights. However, in signal transfer networks, such as those representing the transfer of light across an interferometer, complex-valued edge weights are needed to represent the manipulation of the signal in both magnitude and phase. These complex-valued edge weights introduce interference into the signal transfer, but it is unknown how such interference affects network properties such as small-worldness. To address this gap, we have introduced a small-world interferometer network model with complex-valued edge weights and generalized existing network measures to define the \emph{interferometric clustering coefficient}, the \emph{apparent path length}, and the \emph{interferometric small-world coefficient}. Using high-performance computing resources, we generated a large set of small-world interferometers over a wide range of parameters in system size, nearest-neighbor count, and edge-weight phase and computed their interferometric network measures. We found that the interferometric small-world coefficient depends significantly on the amount of phase on complex-valued edge weights: for small edge-weight phases, constructive interference led to a higher interferometric small-world coefficient; while larger edge-weight phases induced destructive interference which led to a lower interferometric small-world coefficient. Thus, for the small-world interferometer model, interferometric measures are necessary to capture the effect of interference on signal transfer. This model is an example of the type of problem that necessitates interferometric measures, and applies to any wave-based network including quantum networks. 
\end{abstract}

\vspace{2pc}
\noindent{\it Keywords}: complex networks, complex numbers, weighted networks, directed networks, interferometers, clustering, path length, random networks, small-world networks, small-world coefficient, quantum networks

\submitto{{\it J. Phys.: Complexity\/}}

\section{Introduction}
Complex network theory has been used to describe large interacting systems in diverse contexts including sociology \cite{Anato-2007, PadgettJohnF.1993RAat}, the analysis of technological networks like electrical grids \cite{Watts03}, the internet \cite{FaloutsosMichalis1999Opro, 10.1117/12.434393}, and the brain \cite{BullmoreEd2009Cbng, Avena-KoenigsbergerAndrea2018Cdic}. However, complex network theory currently lacks the tools to account for systems with interfering signals. This is especially relevant in problems like quantum networks, where complex-valued edge weights naturally occur. Previous work \cite{SundarBhuvanesh2018Cdot, ZamanShehtab2019RVoQ, BagrovA.A2020Dqcp} handled networks with complex-valued edge weights by taking norms to produce real-valued edge weights, and then applying real-valued complex network measures. This allowed conclusions to be drawn about the magnitude of signals, but this treatment neglected the information stored in the phase of those edge weights. Multiple generalizations of network measures to complex-valued edge weights are possible, and different generalizations may be optimal in different applications. For example, Bottcher and Porter recently introduced alternative generalizations of network measures to complex-valued edge weights \cite{Bot22}, but their measures are not tailored to problems with interfering signals. In particular, the local measures they proposed (strength and clustering) take the form of averages of complex values, which do not involve interference between multiple paths. The discussion on matrix powers and walks does involve interference, but that discussion does not culminate in the introduction of interferometric network measures like the apparent path length measure introduced in this article. A complete treatment of interfering problems requires new network measures that incorporate the phase of complex-valued edge weights on multiple paths as those paths interfere.  


In this article, we take a first step to address this gap in the field of complex networks by extending the concept of small-worldness to a network with complex-valued edge weights that produce interference. We start by modifying the Watts-Strogatz small-world network model \cite{Wat98}, assigning the edges in the network a variable phase $\phi$. The traditional analysis of the small-world model uses two principal network measures: the mean local clustering coefficient and the mean shortest path length between two vertices. The small-world effect occurs when networks simultaneously have short path lengths, on the order of the logarithm of the total network size, while still having a clustering coefficient near one \cite{Wat98, New03}. These two measures can be combined to form a small-world coefficient \cite{Hum08}. These measures, as traditionally defined, do not incorporate phase. Thus, as phase $\phi$ is introduced to edge weights, they will report no change. However, the actual signals at vertices in an interfering small-world network will change with the addition of phase because these signals will undergo constructive or destructive interference. Our extended measures address this discrepancy and are designated as ``interferometric measures" to emphasize their application in the context of interfering signals.  

As a test bed for understanding how signals behave in networks with complex-valued edge weights, we introduce \emph{interferometer networks}. Interferometers are measuring devices that work by splitting waves such as beams of light, allowing those waves to undergo differing phase shifts, and then recombining the waves, causing them to interfere. The intensity of the recombined wave is measured, allowing the user to calculate the difference between the phase shifts associated with distinct paths across the network. We imagined creating a large interferometer with arbitrarily many waves of light, beam splitters, phase shifters, attenuators, and measuring devices (i.e., observers). Such an interferometer is a network over which a light signal is transferred. Based on such an experimental design, which is realizable in the lab (e.g., on an optics table with classical light or in a quantum network experiment), we define the formalism for signal transfer in interferometer networks as a linear algebra problem involving a complex-valued adjacency matrix. The form of the linear algebra of interferometer networks is quite general; interferometer networks serve as an archetype for all network problems involving signal transfer with interference. Thus, interferometer networks can be adapted to other complex-valued signal transfer problems, such as the time evolution of state vectors in quantum walks \cite{Fac13, LuDawei2016Cqw, goldsmith2023link}; inputs, states, and observables in complex-valued observability and controlability problems \cite{LiuYang-Yu2013Oocs, MontanariArthurN2022Foat}; and the matrix analysis of node voltages in alternating-current circuits with complex impedance \cite{Scherz2013, HANCOCK197421}.  We emphasize that at this stage of the analysis of such networks only single-particle or wave-based quantum mechanics is being considered; entangled many-body quantum networks present a future research direction. 


 Next, we generalize the traditional network measures of the clustering coefficient and path length to the \emph{interferometric clustering coefficient} and the \emph{apparent path length}, respectively. Both of these measures incorporate phase by measuring how complex-valued signals add together constructively and destructively in the network context. Using these extended measures, we further define an \emph{interferometric small-world coefficient} to apply to the small-world interferometer model. 
 
 Lastly, we report the results of applying these generalized measures to the small-world interferometer network model in a suite of computational tests. The results demonstrate a rich, phase-dependent behavior in small-worldness that the traditional measures do not capture. 


\section{Small-World Interferometer Model}
To analyze phase-dependence in the small-world effect, we modified the Watts-Strogatz small-world model \cite{Wat98} with complex-valued edge weights. As in the original small-world model, the complex-valued small-world model begins with edges connecting vertices in a ring, and then edges are reshuffled according to a probability $\beta$. Unlike the original small-world model, our model is directed and complex-weighted. First a directed network is constructed by drawing edges out from each vertex, and then the edges are weighted based on an attenuation parameter $s$, out degree $k$, and phase $\phi$. When an edge is reshuffled, the source vertex stays the same, but its destination is randomized. The model is depicted in Figure \ref{fig:ws}, with $N=6$ the number of vertices in the network. The total output strength of vertices is $s$, which must be set such that $s\leq 1$ to control feedback, per Corollary \ref{thm:Pbound}, see Appendix. The out-degree of each vertex is $k$; variable edge weight phase is $\phi$; and $\beta$ is the probability that an edge's destination is randomly reshuffled, in accord with the usual Watts-Strogatz model.

At $\beta = 0$, the model produces a ring, and at $\beta = 1$, the model produces a random network. For $0 < \beta < 1$, the model produces networks that are neither rings nor random, and some networks in this region exhibit the small-world effect; the dominant ring-like structure induces a high clustering coefficient, while the small number of random, long-distance connections greatly reduce the average shortest path length between vertices \cite{Wat98}. 

\begin{figure}[h]
     \centering
     \begin{subfigure}[b]{0.3\textwidth}
         \centering
         \includegraphics[width=\textwidth]{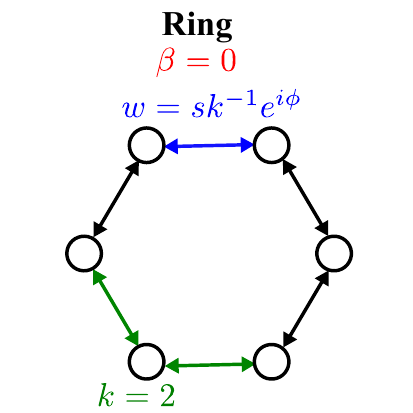}
         \caption{The ring case}
         \label{fig:ringModel}
     \end{subfigure}
     \hfill
     \begin{subfigure}[b]{0.3\textwidth}
         \centering
         \includegraphics[width=\textwidth]{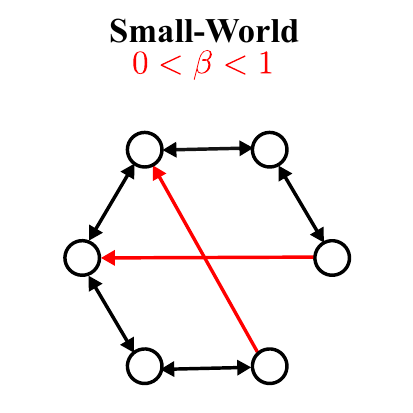}
         \caption{The intermediate region}
         \label{fig:swModel}
     \end{subfigure}
     \hfill
     \begin{subfigure}[b]{0.3\textwidth}
         \centering
         \includegraphics[width=\textwidth]{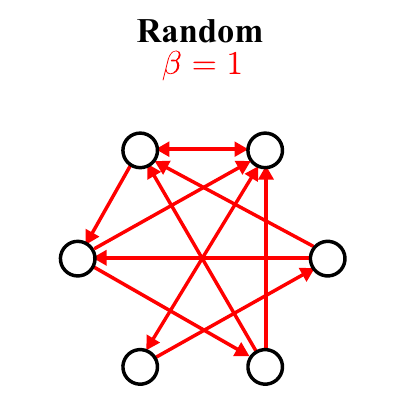}
         \caption{The random case}
         \label{fig:randModel}
     \end{subfigure}
        \caption{Representative small-world interferometer model with $N=6$ vertices, and $k=2$ connections. At $\beta = 0$ (Figure \ref{fig:ringModel}), a ring is formed with $N=6$ vertices connected to their two nearest neighbors. An example of nearest neighbor connections is shown in green to elucidate the meaning of $k$, while the arrowheads indicate the direction of the edges. The edges are weighted with $w = sk^{-1}e^{i\phi}$, highlighted on a particular vertex in blue. For nonzero $\beta$ values (Figure \ref{fig:swModel}), edges are randomly rewired with probability $\beta$. Rewired edges are drawn in red. At $\beta = 1$ (Figure \ref{fig:randModel}), the model yields a kind of random network, where each vertex has out degree 2, but the destinations of those edges are randomized.}
        \label{fig:ws}
\end{figure}

\section{Interferometer Networks}
To analyze the small-world interferometer model, and other problems of its type, we must define this class of problems and the notation for them. We use the case of classical light-based interferometry to inform our decisions. In this case, the signals are the electric field strength at each vertex. We will use this example for context and convenience throughout the rest of the work, but all results are generalizable to arbitrary waves with amplitude and phase, including the Schr\"odinger wavefunction, as found for example in the continuous wave atom laser \cite{chen2022continuous}.

We define interferometer networks to be directed networks with edges weighted by a complex number. The weighted adjacency matrix $W$ contains these complex edge weights. Each vertex has an associated value, corresponding to a signal (the electric field strength). The vertex indexed at $i$ has a signal value $E_i$. The signal vector $\vec{E}$ contains the signals at each vertex, where the vector here refers not to the three spatial components of the electric field but to the number of vertices, $i\in \{1,\ldots,N\}$. The signal $E_i$ is the sum of two inputs: signals traveling over edges to vertex $i$ and a constant source term. The incoming edges carry a signal equal to the edge weight $W_{ij}$ multiplied by the incident vertex’s signal $E_j$. The constant source terms, $S_i$ for each vertex $i$, are contained in a source vector $\vec{S}$. In total, this produces Eq.~(\ref{eqn:vertexSigIndex}).
\begin{equation}
    E_i = S_i + \sum_{j} W_{ij} E_j.
    \label{eqn:vertexSigIndex}
\end{equation}
The entire system is then described by the vertex signal equation, a matrix equation given by Eq.~(\ref{eqn:vertexSig}), :
\begin{equation}
    \vec{E} = W \vec{E} + \vec{S}.
    \label{eqn:vertexSig}
\end{equation}
Eq.~(\ref{eqn:vertexSig}) can be expressed in terms of the network Laplacian, linking this interferometric model to walks. This connection is explored further in Section \ref{sec:conclusion}.

\begin{figure}[ht]
    \centering
    \includegraphics[width = 0.6 \textwidth]{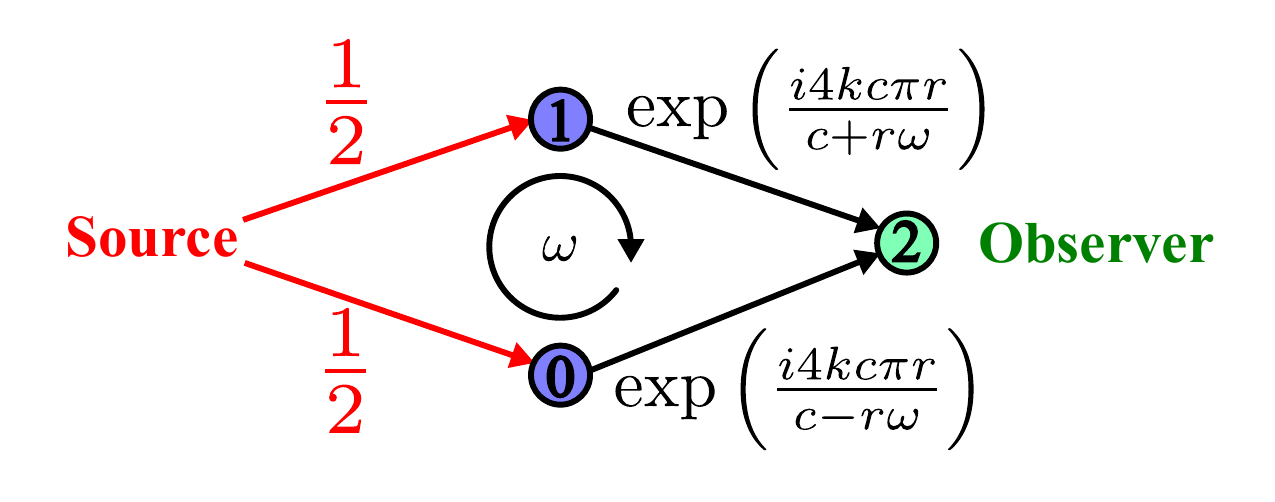}
    \caption{The Sagnac interferometer expressed as an interferometer network. On the network diagram, the source is indicated with red text and lines, the blue vertices are intermediary vertices, and the green vertex is the observer vertex. The parameters in the model are the wavenumber $k$, the speed of light $c$, the radius of the interferometer loop $r$, and the interferometer's angular velocity $\omega$.}
    \label{fig:sagnac}
\end{figure}

As a simple example, we have expressed the Sagnac interferometer, a well-known case used in gyroscopy and many other applications~\cite{culshaw2005optical}, as an interferometer network in Figure \ref{fig:sagnac}. The vertex signal equation for this example is
\begin{equation}
    \vec{E} = \left[ \matrix{
        0 & 0 & 0 \cr
        0 & 0 & 0 \cr
        \exp \left[ \frac{i4kc \pi r}{c+r\omega}\right] & \exp \left[ \frac{i4kc \pi r}{c-r\omega}\right] & 0} 
        \right]
    \vec{E} + \left[ \matrix{
        0.5 \cr
        0.5 \cr
        0
    } \right]
\end{equation}
with solution 
\begin{equation}
    \vec{E} = \left[\matrix{
        0.5 \cr
        0.5 \cr
        0.5 \left( \exp \left[ \frac{i4kc \pi r}{c+r\omega}\right] + \exp \left[ \frac{i4kc \pi r}{c-r\omega}\right] \right)
    }\right].
    \label{eqn:sagnacSolution}
\end{equation}
Taking the magnitude of the final entry of Eq.~(\ref{eqn:sagnacSolution}) yields the expected result for a signal transfer across the Sagnac Interferometer \cite{Pascoli17} and demonstrates the equivalence between the interferometer network formalism and the established analysis of the Sagnac Interferometer. 

\section{Generalized Network Measures}
\label{sec:measures}

The original analysis of the small-world property of the Watts-Strogatz network model \cite{Wat98} is based on the network measures of path length and clustering. The small-world coefficient \cite{Hum08} captures the interplay of these two measures to quantify the small-world effect. To analyze the complex-valued small-world interferometer model, we generalized these measures to describe similar features in complex-weighted networks while capturing the interference behavior of interferometer networks. The extension of the clustering coefficient reduces to the traditional clustering coefficient when the angle $\phi = 0$. However, the generalizations of path length, and, therefore, the small-world coefficient, for interferometer networks do not typically reduce to their real-valued counterparts when $\phi = 0$. These features are reflected in our notation and are presented in more detail in the following subsections.

\subsection{Measuring Interferometer Paths}

When generalizing path length to weighted networks involving real or complex-valued weights, one must decide if and how an edge’s weight contributes to the length of its path. One example of a generalization of path length to weighted networks is Eq.~(\ref{eqn:OpsahlLength}) \cite{OPSAHL2010245}, where $\alpha$ is a parameter that describes how much edge weight contributes to signal transfer or detracts from it.
\begin{equation}
    l = \sum_\textrm{\small path} \left( W_{ij} \right)^\alpha.
    \label{eqn:OpsahlLength}
\end{equation}
However, we argue that paths in interferometers are better characterized by a multiplicative \emph{path strength}, which is the product in Eq.~(\ref{eqn:pathStrength}), because the edge weight in an interferometer network amplifies/attenuates and phase-shifts the signal it carries.  Put simply, multiplied exponentials add in their arguments.
\begin{equation}
    p = \prod_\textrm{\small path}  W_{ij}.
    \label{eqn:pathStrength}
\end{equation}
An additive path length measure can be recovered by taking a logarithm of base $w$, where $w$ is some characteristic edge weight, (e.g., a maximum or mean edge weight magnitude) as shown in Eq.~(\ref{eqn:logp}):
\begin{equation}
    l_p = \log_{w}(p).
    \label{eqn:logp}
\end{equation}

However, the path strength of a single path cannot capture interference, which must involve multiple paths. The total signal sent from vertex $j$ to vertex $i$ is the sum of the signals sent over each path. In practice, for all but the simplest networks, this is computationally challenging to calculate directly. However, the vertex-signal equation (Eq.~(\ref{eqn:vertexSig})) can be algebraically manipulated into Eq.~(\ref{eqn:InvVertexSig}) if the inverse $\left( I - W \right)^{-1}$ exists:
\begin{equation}
    \vec{E} = \left( I - W \right)^{-1} \vec{S}.
    \label{eqn:InvVertexSig}
\end{equation}
The entries $\left[(I - W )^{-1}\right]_{ij}$ quantify the total signal transfer from $j$ to $i$. Thus, we call them the \emph{apparent path strength}, $P_{ij}$. We define the related \emph{apparent path length} to be 
\begin{equation}
a_{ij} = \log_w(P_{ij}).
\label{eqn:lP}
\end{equation}
Apparent path length reduces to traditional path length when only one path exists between $i$ and $j$, along which each edge has weight $w$. Apparent path length will have a similar value to shortest path length when the shortest path dominates signal transfer. However, networks with multiple paths between pairs of vertices will exhibit either constructive or destructive interference. This is why $a_{ij}$ does not typically reduce to the shortest path length, even when no complex phases are involved. Even in the case where all phases are equal to zero, and hence all edge weights are nonnegative real numbers, constructive interference is exhibited. 

We can guarantee that $P = (I - W )^{-1}$ exists by requiring that the $\ell_1$ norm of $W$, $\| W \|_1$, is strictly less than 1. Furthermore, this stipulation bounds the entries of $P$ as shown in Eq.~(\ref{eqn:Pbound}). 
\begin{equation}
    P_{ij} \leq \frac{1}{1-\|W\|_1}.
    \label{eqn:Pbound}
\end{equation}
The proofs for the existence and bounding of $P$ are included in the Appendix. Here we can conceptually explain this bound by noting that the condition that $\| W \|_1 < 1$ is equivalent to requiring that the total signal strength out of any vertex is less than the total signal strength entering the vertex. Thus, $\| W \|_1 < 1$ means that signals decay when passing through a vertex instead of growing or passing undisturbed.  Without an amplifier, this is generally the case in real-world interferometer networks.

To analyze the paths on an interferometer network, we prove that the matrix $P = (I - W )^{-1}$ exists, then we compute the apparent path strength. This measure quantifies both the strength of connections between vertices and the way those paths interact with one another. Since previous network analysis uses path length measures instead of path strength measures, we convert from strength to length using Eq.~(\ref{eqn:logp}).

\subsection{Measuring Interference at One Vertex}

We extend the clustering coefficient to interferometer networks by defining  an \emph{interferometric clustering coefficient} that measures local interference occurring on triangles in a network. Interferometer networks are directed, weighted, and complex-valued, unlike the networks to which the standard clustering measure is typically applied \cite{New03}. Each of these features introduces a challenge to extending clustering. 

Interferometer networks are directed, but the clustering coefficient was originally defined for undirected networks \cite{New03}. For directed networks, several types of triangles can form, and those triangles serve different functions. Fagiolo \cite{Fag07} divides these triangles into four classes: cycle, middleman, in, and out. A clustering measure can be defined with any of these triangle types (or combinations thereof), but middleman triangles lend themselves particularly well to an interferometric interpretation. As depicted in Figure \ref{fig:meshTriangle}, a middleman triangle forms two paths between a pair of vertices $j$ and $k$: one direct, which we call the \emph{shortcut}, and one indirect, passing through vertex $i$, which we call the \emph{through-path}. The interferometric clustering at vertex $i$ compares these two paths.

Clustering was  also originally defined only for unweighted networks. For weighted networks, there are a plethora of generalizations for the clustering coefficient \cite{Sar07}. We have chosen to generalize the interferometric clustering coefficient from the weighted clustering coefficient presented in Zhang \& Horvath \cite{Zhang2005AGF}, which acts on real-valued edge weights $w_{ij}$ and takes the form
\begin{equation}
    C_i = \frac{\sum_{j, k, j \neq k} w_{ki} w_{ij} w_{kj} }{ \sum_{j, k, j \neq k} w_{ki} w_{ij}}.
    \label{eqn:ZhangClust}
\end{equation}
This version lends itself to interpretation as a weighted average of the shortcut edge weight, where weight is given by the path strength of the through-path. This approach is justified in the context of interferometer networks, since interference is most important for signal transfer when it takes place between the strongest paths.

\begin{figure}[h]
    \centering
    \includegraphics[width = 0.6 \textwidth]{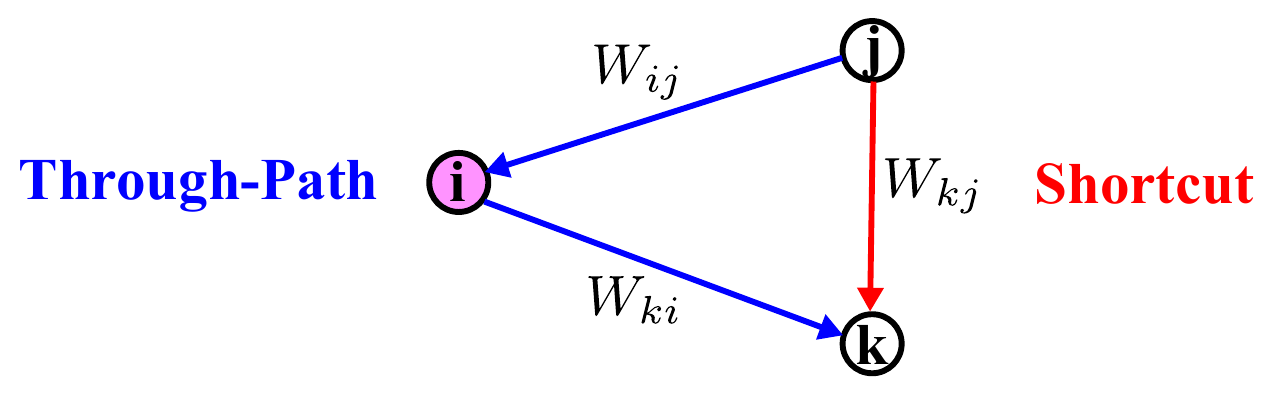}
    \caption{Schematic for the computation of the interferometric clustering at vertex $i$. The path through $i$ is called the \emph{through-path}, and is shown in blue. The path from $j$ to $k$ without $i$ is called the \emph{shortcut}, and is shown in red. Along each edge, the corresponding entry in the complex-valued adjacency matrix is written out in the form $W_{ij}$, to highlight its role in Eq.~(\ref{eqn:intClust}).}
    \label{fig:meshTriangle}
\end{figure}

The definition of the interferometric clustering, $C_i$, is given in Eq.~(\ref{eqn:intClust}): 
\begin{equation}
    C_i = \frac{\sum_{j, k, j \neq k}
    \vert W_{ki} W_{ij} \vert \left( \vert W_{kj} + W_{ki} W_{ij} \vert - \vert W_{ki} W_{ij} \vert\right)
    }{\sum_{j, k, j \neq k}
    \vert W_{ki} W_{ij} \vert
    }
    \label{eqn:intClust}
\end{equation}
The simple $W_{kj}$ in Eq.~(\ref{eqn:ZhangClust}) is replaced by $\left( \vert W_{kj} + W_{ki} W_{ij} \vert - \vert W_{ki} W_{ij} \vert\right)$, which measures how much the magnitude of the the total signal from $j$ to $k$ increases when the shortcut is included. This term is conceptually similar to the reverse triangle inequality in the way that it handles phase differences; if the two paths share the same phase, it reduces to $|W_{kj}|$, but if the two paths have differing phases, the result will be less than $|W_{kj}|$. Note that Eq.~(\ref{eqn:intClust}) is symmetric under exchange of vertices $j,k$. Further, we note that interferometric clustering can take on negative values when the shortcut interferes destructively with the through-path, meaning that the signal from $j$ to $k$ is actually less than if there had been no shortcut at all. The interferometric clustering coefficient reduces to Eq.~(\ref{eqn:ZhangClust}) when the two paths have no phase, and further reduces to the unweighted clustering coefficient when all edge weights equal $1$.

Note that the interferometric clustering coefficient reduces to the weighted clustering coefficient when all $w_{ij}$ entries are nonnegative real numbers. This further reduces to the traditional unweighted clustering coefficient when the network is treated as undirected and all nonzero $w_{ij} = 1$.

\subsection{Measuring the Small-World Coefficient in Interferometer Networks}
A network is considered \emph{small-world} if it has a high clustering coefficients and low vertex-to-vertex path lengths. Humphries \& Gurney \cite{Hum08} defined the small world coefficient, denoted $S_\textrm{real}$ to quantify this property, using random networks of the same size and edge count as a baseline. That measure takes the form
\begin{equation}
    S_\textrm{real} = \frac{\gamma}{\lambda},
    \label{eqn:Strad}
\end{equation}
where 
\begin{equation*}
    \gamma = \frac{\bar{C}}{\bar{C}_{\textrm{random}}}, \lambda = \frac{\bar{l}}{\bar{l}_{\textrm{random}}},
\end{equation*}
$\bar{C}$, $\bar{C}_{\textrm{random}}$ is the mean clustering, and $\bar{l}$, $\bar{l}_{\textrm{random}}$ is the mean shortest path length between two vertices in the network of interest and a random network of the same size, respectively. Thus, large values of $S_\textrm{real}$ correspond to networks with short path lengths, like those in a random network, but also a large clustering coefficient, unlike the random network baseline.

To generalize the small-world coefficient to interferometer networks with complex-valued edge weights, we defined a version of the small-world coefficient that accounts for the possibility of negative interferometric clustering coefficients (arising from destructive interference between shortcuts and through-paths) and negative apparent path lengths (arising from constructive interference that causes net amplification). We define the \emph{interferometric small-world coefficient} to be
\begin{equation}
    S_{\textrm{int}} = \frac{\delta}{\sigma},
    \label{eqn:S}
\end{equation}
where \begin{equation*}
    \delta = \frac{\bar{C} + |\bar{C}_{\textrm{random}}| - \bar{C}_{\textrm{random}}}{|\bar{C}_{\textrm{random}}|},
\end{equation*}
\begin{equation*}
    \sigma = \frac{\bar{a} + |{(\bar{a})}_{\textrm{random}}| - {(\bar{a})}_{\textrm{random}}}{|{(\bar{a})}_{\textrm{random}}|}.
\end{equation*}
The adapted $\delta$ and $\sigma$ definitions were constructed to have the following key properties. For $\delta$,  (1) it reduces to the original definition of $\gamma$ when all inputs are nonnegative numbers; (2) the result is always nonnegative; (3) if $C = C_{\mathrm{random}}$, then $\delta = 1$; (4) if $C > C_{\mathrm{random}}$, then $\delta > 1$; (5) and if $C < C_{\mathrm{random}}$, then $\delta < 1$. Analogous properties hold for $\sigma$. 

The interferometric small-world coefficient $S_{\textrm{int}}$ does not typically reduce to $S_{\textrm{real}}$. This is for the same reason that $a_{ij}$ does not typically reduce to the shortest path length: interference. However, in cases without interference, $S_\textrm{real}$ is recovered from $S_\textrm{int}$. For example, in a real-weighted network where no more than one path exists between any pair of vertices, $a_{ij}$ will reduce to the shortest path length. The inputs to $S_{\textrm{int}}$ will be real and nonnegative, causing $\delta$ to reduce to $\gamma$ and $\sigma$ to reduce to $\lambda$. Thus, these cases recover the traditional small-world coefficient.

\section{Phase Dependence of the Interferometric Small-World Coefficient}
\label{sec:phaseDep}

In this section, we report the results of applying the interferometric clustering, apparent path length, and interferometric small-world coefficient measures to the small-world interferometer model. As a baseline, we also applied the original real-valued measures by taking norms of all edge weights. For real-valued clustering, we use Eq.~(\ref{eqn:ZhangClust}). For real-valued path length, we use the path length recovered from the strongest path strength (Eq.~(\ref{eqn:logp})). For the real-valued small-world coefficient, we used Eq.~(\ref{eqn:Strad}). We describe numerical results for the way $S_\mathrm{int}$ varies with respect to reconnection probability $\beta$ for a few configurations of phase $\phi$, how $\phi$ changes the peak small-world coefficient over all $\beta$, testing resiliency of this effect to non-uniformity in $\phi$, and demonstrating that the observed effect holds over a wide range of number of vertices $N$ and  out-degree $k$. Overall, we find that the interferometric small-world coefficient depends significantly on the phase of edge weights.

First, we examine networks with $N = 500$, $k = 12$, $s = 0.9$, and uniform phase $\phi$ on all edges in the network. We chose $N = 500$ because it was the largest network size our computing cluster could test in large batches in a few hours. Nearest-neighbor count $k = 12$ was chosen to ensure that $k << N$, where the effect of rewiring is most visible \cite{Wat98}, but also so that $k$ was large enough to give a clustering coefficient near $1$. Attenuation parameter $s = 0.9$ was chosen because it is close enough to $1$ that constructive interference on paths can cause strong long-range signal transfer, while it is small enough that the apparent path length does not diverge to extremely large values. For each selected configuration of $\beta$ and $\phi$, we ran at least 100 tests (more for sensitive values of $\beta$ at $\phi = 0$), computed their complex network measures, and averaged them for each set of model parameters. We plot $S_\textrm{int}$ over $\beta$ for a few values of $\phi$ ranging from $0$ to $\pi$ in Figure \ref{fig:overBeta}.

\begin{figure}[H]
    \centering
    \includegraphics[width = 0.95 \textwidth]{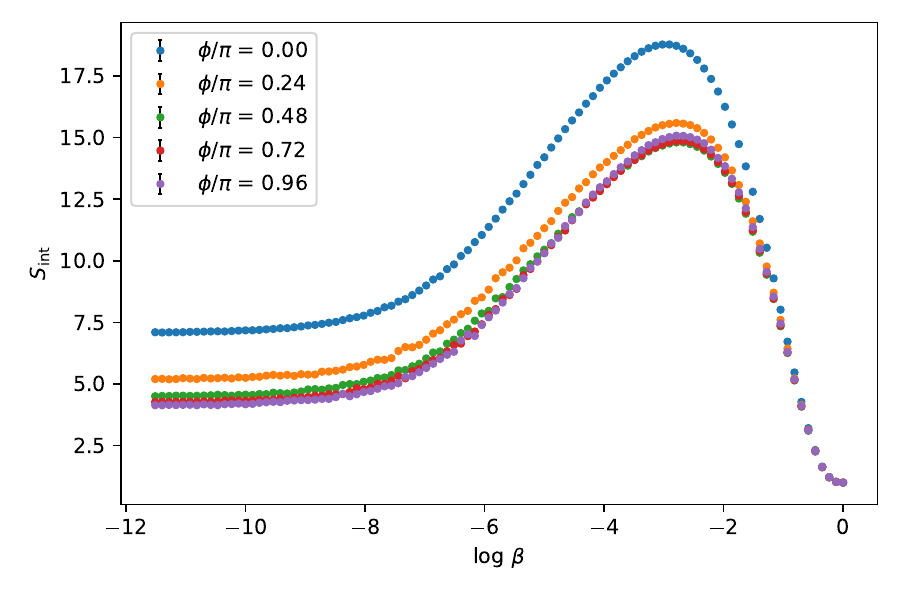}
    \caption{\emph{Uniform-phase small-world interferometer.}  The interferometric small-world coefficient (Eq.~(\ref{eqn:S})) is relatively small at the extremes of $\beta$, but peaks in an intermediate region, where the interferometric clustering coefficient is high while the apparent path length is low, recreating the original small-world effect. However, we see that the scale of this curve changes as $\phi$ is varied: $\phi$ near $0$ or $2\pi$ is dominated by constructive interference, which makes the peak higher, while $\phi$ near $\pi$ introduces more destructive interference, which diminishes the height of the peak of the $S_\textrm{int}$ curve. Here there are $N = 500$ vertices and $k = 12$ nearest-neighbor connections, and the error bars represent the spread due to $\geq 100$ random instances of small-world model networks. Error bars are included, but they are not visible because they are smaller than the circular point markers. }
    \label{fig:overBeta}
\end{figure}

The first key observation from Figure \ref{fig:overBeta} is that the interferometric small-world coefficient quantifies the original small-world effect. At very small values of $\beta$, path lengths are long, but the clustering coefficient is high. At $\beta$ near 1, the clustering coefficient is lower, but the path lengths are short. At both of these extremes of $\beta$, the interferometric small-world coefficient is relatively small. However, there is an intermediate region of $\beta$ where, simultaneously, clustering is high and path lengths are short. This is where the small-world coefficient peaks.  This replicates the behavior of $S_\textrm{real}$ in \cite{Hum08}. Thus, while $S_\textrm{int}$ modifies the form of $S_\textrm{real}$ heavily, it still functions as a small-world coefficient, since its definition reflects the balance between clustering and path length that is captured by the original small-world coefficient.

The second key observation from Figure \ref{fig:overBeta} is that although similar behavior of $S_\textrm{int}$ with respect to $\log \beta$ is observed for each value of $\phi$, $S_\textrm{int}$  also changes with $\phi$. The $S_\mathrm{int}$ vs. $\mathrm{log}\beta$ curve attains its maximum at $\phi=0$, when constructive interference simultaneously strengthens the interferometric clustering coefficient and shortens the apparent path length. As $\phi$ increases, destructive interference is introduced, and the small-world effect is weakened. The overall scale of the curve is reduced as $\phi$ increases. 

To get a clearer picture of this new effect, we measured the peak of the curve in Figure \ref{fig:overBeta} for each value of $\phi$ and plotted it in Figure \ref{fig:overphi}. For reference, we plotted the curve of peak values against the peak value of the real-valued small-world coefficient, which does not change with respect to $\phi$, aside from minor changes due to numerical variation or error. We see that for the $N = 500$, $k = 12$, $s = 0.9$ case, the small-world effect is strengthened by $22 \%$ at $\phi = 0$. However, as $\phi$ increases, the small-world effect is weakened by destructive interference by as much as $5 \%$. This pattern repeats in reverse as $\phi$ approaches $2\pi$, due to the $2\pi$-periodicity of phase. Intuition may suggest that $S_\textrm{int}$ ought to approach $S_\textrm{real}$ as $\phi \to 0$, since the computation of $S_\textrm{int}$ becomes one involving only real numbers. However, this is not the case, because, in addition to involving complex-valued terms, the interferometric small-world coefficient accounts for signal transfer along all possible paths between pairs of vertices. This contrasts with $S_\textrm{real}$, which only accounts for signal transfer along the shortest path. Thus, in general, $S_\textrm{int}$ does not reduce to $S_\textrm{real}$ when $\phi = 0$ because the interferometric measure captures the combined effect of all paths constructively interfering together. Similar phase-dependence was observed in maximum interferometric small-world coefficient values for all network configurations tested (see details below).

\begin{figure}[H]
    \centering
    \includegraphics[width = 0.95 \textwidth]{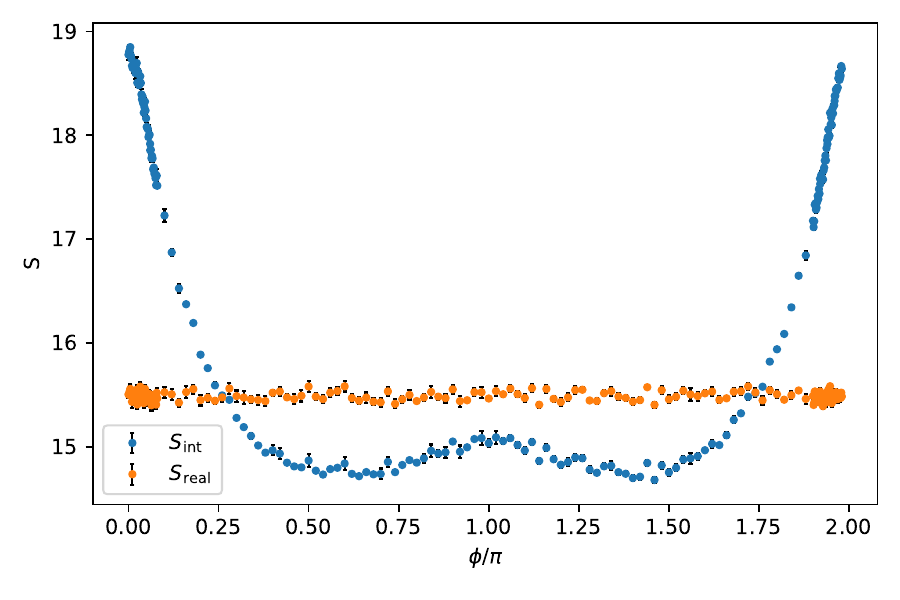}
    \caption{Maximum interferometric small-world coefficient values for each configuration in phase $\phi$. Near $\phi = 0$, the small-world effect is strengthened by constructive interference. Further from $\phi = 0$, destructive interference is introduced, and the small-world effect is weakened. The small-world coefficient increases again at $\phi = 2\pi$, since phase is $2\pi$-periodic. The original small world coefficient $S_\textrm{real}$ (per Eq.~(\ref{eqn:Strad})) is plotted for reference; it is approximately constant because it does not depend on phase. This plot shows results for the small-world interferometer model with size $N = 500$ and $k = 12$ nearest neighbor connections.}
    \label{fig:overphi}
\end{figure}

We tested a version of the $N = 500$, $k = 12$, $s = 0.9$ case with $\phi$ non-uniform, to see if the phase-dependence of the interferometric small-world coefficient was sensitive to small variation in $\phi$. We suspected that the effect might not be resilient to $\phi$ variability, especially near $\phi = 0$, because variability would introduce destructive interference to the case otherwise dominated by strict constructive interference over long paths. This is relevant because an experimental interferometer network will likely have such variability in practice. For this test we added a normally-distributed $\pm 0.2 \pi$ error to all phases in the network. The results in Figure \ref{fig:phaseVar} show that phase variability diminishes the phase-dependence of the interferometric small-world coefficient, but the phase-dependence of the effect remains significant.

\begin{figure}[H]
    \centering
    \includegraphics[width = 0.95 \textwidth]{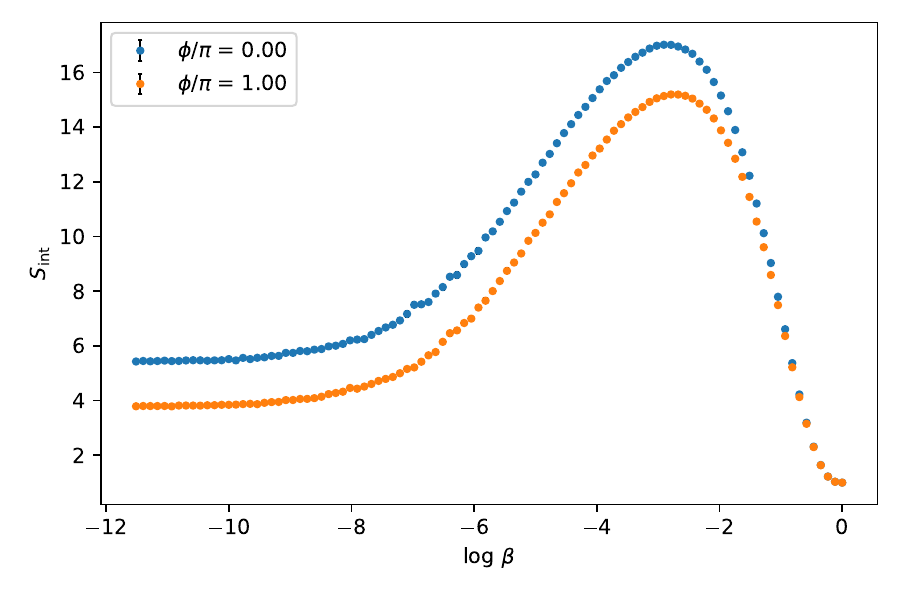}
    \caption{\emph{Small-world interferometer with phase variability.}  The interferometric small-world coefficient $S_\mathrm{int}$ is plotted against $\log(\beta)$ for small-world interferometer networks with $N=500$ nodes, $k=12$ nearest-neighbor connections from each node, and phases on all edges distributed normally with a mean value of $\phi$ and a standard deviation of $10 \%$ of $2 \pi$. We compare mean phases of $\phi = 0$ and $\phi = \pi$. We find that the effect of $\phi$ on the scale of $S_\mathrm{int}$(see Figure \ref{fig:overBeta}) is diminished, but it remains significant.}
    \label{fig:phaseVar}
\end{figure}

While $N = 500$, $k = 12$ serves to demonstrate that the small-world effect can change as $\phi$ varies, it is only a particular case. To demonstrate that this effect holds more generally, we ran tests on a wide range of parameters. For each set of parameters, we ran 100 tests. The parameters were selected with ranges $100 \leq N \leq 1500$ and $4 \leq k \leq 10$. For each configuration of $N$ and $\phi$, we ran 50 trials and averaged their measures. In particular, we examined the interferometric small-world coefficient at $\phi = 0$ and $\phi = \pi$. We selected these values of $\phi$ because $\phi=0$ is the case for which total constructive interference dominates, while $\phi = \pi$ is the center of the region where destructive interference exists. Then, we computed the ratio of these two measurements. Figure \ref{fig:randHist} depicts a histogram of the base-10 logarithm of these ratios. Notice that, for all configurations, the logarithm is greater than zero, which implies that $S_\mathrm{int}(\phi = 0) > S_\mathrm{int}(\phi = \pi)$ for all trials. This means that, like in Figure \ref{fig:overphi}, the small-world coefficient is higher when all interference is constructive than when destructive interference exists at $\phi = \pi$. This holds for all tested configurations of $N$ and $k$ at $s = 0.9$. Examining the modal value in Figure \ref{fig:randHist}, $10^{0.1}\simeq 1.25$ indicates that constructive phase interference typically increases the small-world effect by about 25\% over the destructive case.

\begin{figure}[H]
    \centering
    \includegraphics[width = 0.7 \textwidth]{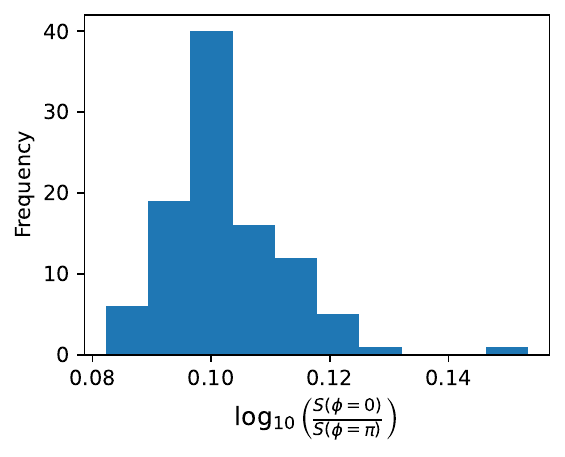}
    \caption{This histogram confirms that the interferometric small-world coefficient, $S_\textrm{int}$, is higher at edge-weight phase $\phi = 0$ than it is at $\phi = \pi$ for all tested configurations of network size $N$ and nearest-neighbor-connection number $k$ with uniform phase. This is shown by binning and counting the logarithms of the ratio $S_\mathrm{int}(\phi = 0) / S_\mathrm{int}(\phi = \pi)$. Notice that the logarithms are always greater than zero, implying  $S_\mathrm{int}(\phi = 0) > S_\mathrm{int}(\phi = \pi)$ for all tests.}
    \label{fig:randHist}
\end{figure}

In all of our tests, constructive interference at phases near $\phi = 0$ strengthens the small-world effect, while destructive interference at phases further from $\phi = 0$ weakens the small-world effect. This behavior is captured by the interferometric measures -- apparent path length, interferometric clustering, and interferometric small-world coefficient -- but it is not detected by the real-valued measures taken by eliminating phase information with an absolute value. 

\section{Discussion and Conclusion}
\label{sec:conclusion}
Our computational tests revealed that the small-world effect is made stronger by constructive interference and weaker by destructive interference. We measured this behavior by applying our newly defined interferometric measures: apparent path length (Eq.~(\ref{eqn:lP})), interferometric clustering (Eq.~(\ref{eqn:intClust})), and the interferometric small-world coefficient (Eq.~(\ref{eqn:S})) to a small-world interferometer model (Figure \ref{fig:ws}). In contrast, the original real-valued measures of path length, clustering, and the small-world coefficient, found by taking absolute values of all complex-valued edge weights, are insensitive to the effects of interference.

This result serves as an example of the type of problem that requires measures, such as interferometric measures, that account for interfering signals. Such problems are ubiquitous in physical science; they include quantum walks \cite{Fac13, LuDawei2016Cqw, goldsmith2023link}, complex-valued observability and controlability problems \cite{LiuYang-Yu2013Oocs, MontanariArthurN2022Foat}, and the matrix analysis of node voltages in alternating-current circuits with complex impedance \cite{Scherz2013, HANCOCK197421}. The interferometer network scheme can be adapted to these contexts by modifying the vertex signal equation (Eq.~(\ref{eqn:vertexSig})) to relate the relevant signals at vertices to whatever quantity is considered the edge weight. For example, time-dependent AC networks based on fixed carrier frequency can effectively be represented by an edge with amplitude (signal strength) and phase (signal phase), providing a new method to study certain simpler classes of network synchronization problems. As written, Eq.~(\ref{eqn:vertexSig}) already bears a resemblance to a steady-state solution to a random walk. We can rewrite Eq.~(\ref{eqn:vertexSig}) as
\begin{equation}
    {\vec{S}} - (I - W)\vec{E} = \vec{0},
\end{equation}
which is the steady-state solution to
\begin{equation}
    \frac{d}{dt} \vec{E} = {\vec{S}} - (I - W)\vec{E}.
\end{equation}
Now, $\vec{S}$ becomes a constant source term, $(I-W)$ becomes a normalized, weighted Laplacian matrix, and the time evolution becomes reminiscent of the walk problems in \cite{Fac13, goldsmith2023link}. Once a physical problem is expressed as an interferometer network, measures such as interferometric clustering and apparent path strength, will be necessary for accounting for the effect of interference on signal transfer.

Beyond recasting other network problems as interferometer networks, this work presents several other opportunities for further research. The most immediate problem is to analytically describe the behavior of the small-world interferometer model with respect to the model parameters. This would give a much more thorough understanding of both the effect described in this article and any others that arise due to the inclusion of phase. The next direction of further study is quantum mechanics. Although this work was performed in the context of interferometry, this was intended to be a first step in applying complex-valued network measures to quantum problems. In particular, interferometric measures lend themselves to quantum walks and condensed matter models ~\cite{BagrovA.A2020Dqcp}. Toward this goal, interferometric measures could be applied to extend recent progress on analyzing walks on complex-weighted networks, especially Hermitian adjacency matrices \cite{Yu23}. A related topic of interest is applying interferometric measures to neural networks for quantum systems undergoing a phase transition; it is possible that the interferometric measures can detect the phase transition. Lastly, the analysis of small-world interferometer networks ought to be modeled with further real-world considerations, especially different edge weight distributions, and the interferometric measures ought to be applied to real-world data sets. 

\section{Acknowledgements}
This material is based upon work supported by the National Science Foundation under Grants CCF-1839232, DMR-2002980, and PHY-221056.  This work was performed in part at Aspen Center for Physics, which is supported by National Science Foundation grant PHY-2210452.

The authors acknowledge useful conversations with Brandon Barton.  We acknowledge the Colorado School of Mines High-Performance Computing resources (\url{https://ciarc.mines.edu/hpc/}) made available for conducting the research reported in this article. 

\appendix
\section*{Appendix: Existence and Bounding of Apparent Path Strength}
\setcounter{section}{1}
\label{app:P}

In Section \ref{sec:measures}, we noted that the apparent path strength $P$ is only defined if the matrix $(I-W)^{-1}$ exists. We posited that requiring $\|W\|_1 < 1$ would be sufficient to guarantee existence, and we gave a conceptual explanation for this requirement. Here, we present the proofs for the existence and bounding of $P$. 

\begin{theorem}[Existence of $P$]
Consider an interferometer network with weighted adjacency matrix $W$ such that $\|W\|_1 < 1$. Then the matrix $I-W$ is invertible, and the apparent path strength matrix 
$P = \left(I - W\right)^{-1}$ exists. 
\label{thm:Pex}
\end{theorem}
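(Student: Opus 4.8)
The plan is to exhibit $P$ explicitly as the \emph{Neumann series} $\sum_{n=0}^{\infty} W^{n}$ and verify that this series converges to a two-sided inverse of $I - W$. The first step is to record that the norm appearing in the hypothesis is the matrix norm induced by the vector $1$-norm (equivalently, the maximum absolute column sum of $W$, i.e.\ the largest total signal strength out of a vertex), and as such it is submultiplicative: $\|AB\|_{1} \le \|A\|_{1}\,\|B\|_{1}$ for all $N \times N$ matrices $A, B$. Iterating gives $\|W^{n}\|_{1} \le \|W\|_{1}^{\,n}$. Since $\|W\|_{1} < 1$ by assumption, the scalar geometric series $\sum_{n \ge 0} \|W\|_{1}^{\,n}$ converges; hence the partial sums $S_{N} = \sum_{n=0}^{N} W^{n}$ form a Cauchy sequence in the space of $N \times N$ complex matrices, which is complete in the norm $\|\cdot\|_{1}$. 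Therefore $S_{N} \to S$ for some matrix $S$.

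The second step is to identify $S$ as the inverse. Multiplying out the finite sum gives the telescoping identity $(I - W)\,S_{N} = I - W^{N+1} = S_{N}\,(I - W)$. Because $\|W^{N+1}\|_{1} \le \|W\|_{1}^{\,N+1} \to 0$ and matrix multiplication is continuous, letting $N \to \infty$ yields $(I - W)\,S = I = S\,(I - W)$. Thus $I - W$ is invertible with $(I - W)^{-1} = S$, so $P$ exists; as a byproduct this representation gives $\|P\|_{1} \le \sum_{n \ge 0} \|W\|_{1}^{\,n} = (1 - \|W\|_{1})^{-1}$, and since each entry of a matrix is dominated by its induced $1$-norm, one immediately recovers the bound in Eq.~(\ref{eqn:Pbound}). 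I would also note the shorter route to invertibility alone: if $(I - W)x = 0$ then $x = Wx$, so $\|x\|_{1} = \|Wx\|_{1} \le \|W\|_{1}\,\|x\|_{1}$, which forces $x = 0$ since $\|W\|_{1} < 1$; a square matrix with trivial kernel is invertible. I would nonetheless present the Neumann-series version, because it simultaneously supplies the quantitative estimate needed for the bounding result.

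The only delicate point here is analytic bookkeeping rather than anything structural: one must state explicitly that the finite-dimensional matrix space is complete in the chosen norm and that matrix multiplication is continuous, so that the limit may legitimately be exchanged with multiplication by $(I - W)$ in the telescoping identity. The hypothesis $\|W\|_{1} < 1$ does all of the real work, and I do not anticipate any obstacle beyond this standard functional-analytic setup.
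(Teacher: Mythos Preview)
Your proposal is correct. Interestingly, your ``shorter route'' is precisely the paper's proof: the paper shows $(I-W)$ has trivial kernel by estimating $\|(I-W)\vec{x}\|_1 \geq (1-\|W\|_1)\|\vec{x}\|_1$ via the triangle inequality and submultiplicativity, then invokes the invertible-matrix theorem. The Neumann-series construction you lead with is a genuinely different argument. What you gain is exactly what you identify: the series representation is constructive and delivers the entrywise bound of the Corollary in the same breath, whereas the paper treats existence and bounding as two separate statements with largely duplicated estimates. What the paper's kernel argument buys is economy of analytic machinery---no completeness, no Cauchy sequences, no continuity of multiplication---just a two-line inequality and a fact about square matrices. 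Both are standard; your version packages Theorem and Corollary together more efficiently, while the paper's is slightly more self-contained for a reader unfamiliar with Neumann series.
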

\begin{proof}
If it exists, $P$ is the inverse of $(I-W)$. By the fundamental theorem of invertible matrices \cite[172]{PooleDavid2014LAAM}, it will suffice to show that, for all $\vec{x} \neq \vec{0}$,
\begin{equation}
    (I - W)\vec{x} \neq \vec{0}.
\end{equation}
Equivalently, this will be true if
\begin{equation}
    \|(I - W)\vec{x} \|_1 > 0,
\end{equation}
for all $\vec{x} \neq \vec{0}$. By the triangle inequality,
\begin{equation}
    \|(I - W)\vec{x} \|_1 + \|W\vec{x}\|_1 \geq \|\vec{x}\|_1.
\end{equation}
\begin{equation}
    \Rightarrow \|(I-W)\vec{x}\|_1 \geq \|\vec{x}\|_1 - \|W\vec{x}\|_1.
\end{equation}
By the consistency of the $\ell_1$ matrix norm \cite{BeilinaLarisa2017NLAT},
\begin{equation}
    \|W \vec{x}\|_1 \leq \|W\|_1 \|\vec{x}\|_1.
\end{equation}
Introducing the matrix norm into our inequality, we have
\begin{equation}
    \|(I-W)\vec{x}\|_1 \geq \left(1 - \|W\|_1 \right) \|\vec{x}\|_1.
\end{equation}
Therefore, $(I-W)$ is invertible and $P$ exists whenever $1 - \|W\|_1 > 0$.
\end{proof}

\begin{corollary}[Bounding the entries of $P$]
 Furthermore, the entries of the $P$ matrix in Theorem \ref{thm:Pex} are bounded. In particular, let $P_{\max} \equiv \max_{i, j}|P_{ij}|$. Then,
 $$
 P_{\max} \leq \frac{1}{1 - \|W\|_1}.
 $$
 \label{thm:Pbound}
\end{corollary}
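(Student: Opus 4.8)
The plan is to leverage the operator-norm inequality already established in the proof of Theorem~\ref{thm:Pex}, namely $\|(I-W)\vec{x}\|_1 \ge (1-\|W\|_1)\|\vec{x}\|_1$ for all $\vec{x}$, and to turn it first into a bound on the induced matrix norm $\|P\|_1$ and then into an entrywise bound on $P_{\max}$. An equivalent route is via the Neumann series, which I will mention as an alternative.

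First I would substitute $\vec{x} = P\vec{y}$ for an arbitrary vector $\vec{y}$. Since $P$ exists by Theorem~\ref{thm:Pex} and $(I-W)P = I$, the inequality becomes $\|\vec{y}\|_1 \ge (1-\|W\|_1)\,\|P\vec{y}\|_1$, i.e. $\|P\vec{y}\|_1 \le \|\vec{y}\|_1/(1-\|W\|_1)$. Taking the supremum over unit $\vec{y}$ yields $\|P\|_1 \le 1/(1-\|W\|_1)$. Next I would recover the entrywise statement: choosing $\vec{y} = \hat{e}_j$, the $j$-th standard basis vector, $P\hat{e}_j$ is the $j$-th column of $P$, so $\sum_i |P_{ij}| = \|P\hat{e}_j\|_1 \le 1/(1-\|W\|_1)$. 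Because each $|P_{ij}|$ is a single nonnegative summand in that sum, $|P_{ij}| \le 1/(1-\|W\|_1)$ for every $i,j$, hence $P_{\max} \equiv \max_{i,j}|P_{ij}| \le 1/(1-\|W\|_1)$. Alternatively, one may invoke the Neumann series $P = \sum_{n=0}^{\infty} W^n$, which converges because $\|W\|_1 < 1$, and bound $\|P\|_1 \le \sum_{n=0}^{\infty} \|W\|_1^n = 1/(1-\|W\|_1)$ by submultiplicativity of the induced norm plus the geometric series; the entrywise step is then identical.

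The main subtlety to handle carefully is the norm convention: $\|W\|_1$ here must be read as the matrix norm induced by the $\ell_1$ vector norm (the maximum absolute column sum), so that the consistency/submultiplicativity invoked in Theorem~\ref{thm:Pex} applies and so that the passage from $\|P\|_1$ to individual entries through basis vectors is legitimate. Once that convention is fixed, no step is more than a line, so the "hard part" is really just bookkeeping — making sure the same norm is used throughout and that the inequality from Theorem~\ref{thm:Pex} is applied to $P\vec{y}$ rather than to $\vec{y}$ directly.
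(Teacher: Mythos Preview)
Your proposal is correct and follows essentially the same route as the paper: substitute $\vec{x}=P\vec{y}$ into the inequality $\|(I-W)\vec{x}\|_1\ge(1-\|W\|_1)\|\vec{x}\|_1$ from Theorem~\ref{thm:Pex} to bound $\|P\|_1$, then pass from the induced $\ell_1$ norm (maximum column sum) to the entrywise maximum. Your basis-vector argument for the last step and the Neumann-series alternative are fine embellishments, but the core argument is identical to the paper's.
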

\begin{proof}
The norm $\|W\|_1$ is calculated as
\begin{equation}
    \|W\|_1 = \max_{1 \leq j \leq N} \sum_{i = 1}^N |W_{ij}|.
    \label{eqn:matNorm}
\end{equation}
By examining Eq.~(\ref{eqn:matNorm}), we observe that the $\ell_1$ norm of $P$ is an upper bound for $P_{\max}$. The $\ell_1$ norm is defined \cite{BeilinaLarisa2017NLAT} as
\begin{equation}
    \|P\|_1 = \sup_{\vec{y} \neq 0} \frac{\| P \vec{y} \|_1}{\|\vec{y}\|_1}. 
\end{equation}
Let $\vec{x} = P \vec{y}$. Then$\|P\|_1$ can be equivalently expressed as
\begin{equation}
    \|P\|_1 = \sup_{\vec{x} \neq 0} \frac{\|\vec{x}\|_1}{\|(I - W) \vec{x} \|_1}.
\end{equation}
Now, as before, we use the triangle inequality and the consistency of the $\ell_1$ matrix norm \cite{BeilinaLarisa2017NLAT} to show that
\begin{equation}
    \|(I-W)\vec{x}\|_1 \geq \left(1 - \|W\|_1 \right) \|\vec{x}\|_1.
\end{equation}
Therefore,
\begin{equation}
    P_{\max} \leq \|P\|_1 \leq \frac{1}{1 - \|W\|_1}.
\end{equation}
\end{proof}

\section*{References}
\bibliographystyle{unsrt}
\bibliography{myrefs}

\end{document}